\def\kron{\otimes}
\def\diag{\mathrm{diag}}
\def\Htran{\mbox{\tiny H}}
\def\Ttran{\mbox{\tiny T}}
\newcommand{\fracSumtwo}[2]{\overset{#2}{\underset{#1}{\sum}}}
\newcommand{\vect}[1]{\mathbf{#1}}
\theoremstyle{remark}
\newtheorem{lemma}{Lemma}
\title{Circuit-Aware Design of Energy-Efficient Massive MIMO Systems}
\name{Emil~Bj\"ornson$^{\star \dagger}$ \qquad Michail Matthaiou$^{\ddagger}$ \qquad M\'erouane~Debbah$^{\star}$\thanks{E.~Bj\"ornson is funded by the International Postdoc Grant 2012-228 from the Swedish Research Council. This research has been supported by the ERC Starting Grant 305123 MORE (Advanced Mathematical Tools for Complex Network Engineering).}}
\address{\normalsize $^\star$Alcatel-Lucent Chair on Flexible Radio, SUPELEC, Gif-sur-Yvette, France\\
\normalsize$^\dagger$ACCESS Centre, Dept.~of Signal Processing, KTH Royal Institute of Technology, Stockholm, Sweden\\
\normalsize$^\ddagger$ECIT Institute, Queen's University Belfast, Belfast, U.K. and S2, Chalmers University of Technology, Gothenburg, Sweden}
\begin{document}

\maketitle

\begin{abstract}
Densification is a key to greater throughput in cellular networks. The full potential of coordinated multipoint (CoMP) can be realized by massive multiple-input multiple-output (MIMO) systems, where each base station (BS) has very many antennas. However, the improved throughput comes at the price of more infrastructure; hardware cost and circuit power consumption scale linearly/affinely with the number of antennas. In this paper, we show that one can make the circuit power increase with only the square root of the number of antennas by circuit-aware system design. To this end, we derive achievable user rates for a system model with hardware imperfections and show how the level of imperfections can be gradually increased while maintaining high throughput. The connection between this scaling law and the circuit power consumption is established for different circuits at the BS.
\end{abstract}

\section{Introduction}
\label{sec:intro}

We consider a cellular network where each BS communicates with $K$ unique single-antenna user equipments (UEs). Interference coordination is a major limiting factor in these systems but can be handled in the spatial domain by CoMP methods, where several antennas, $N$, are deployed at each BS \cite{Gesbert2010a}. The massive MIMO paradigm, where $N \gg K$, has gained particular traction in recent years \cite{Rusek2013a}, because it allows for distributed interference coordination and brings robustness to having imperfect channel state information (CSI).

Two important practical issues with the deployment of large antenna arrays are the increased hardware cost and circuit power consumption \cite{Bjornson2014a}---these scale linearly/affinely with $N$ unless we redesign the network with these two issues in mind. Low-cost energy-efficient transceiver equipment suffer from hardware imperfections, which must be modeled properly if accurate conclusions are to be drawn \cite{Schenk2008a,Mezghani2010a,Bjornson2014c}. In this paper, we consider an uplink system distorted by multiplicative phase-drifts, additive distortion noise, and noise amplifications. We derive achievable user rates and prove that the level of imperfections can be gradually increased with $N$. The practical implications of this scaling law are established for three circuits at the BSs:  analog-to-digital converter (ADC), low noise amplifier (LNA), and local oscillator (LO). These are the main components of the typical receiver illustrated in Fig.~\ref{figure_blockdiagram}.

\begin{figure}
\begin{center}
\includegraphics[width=.9\columnwidth]{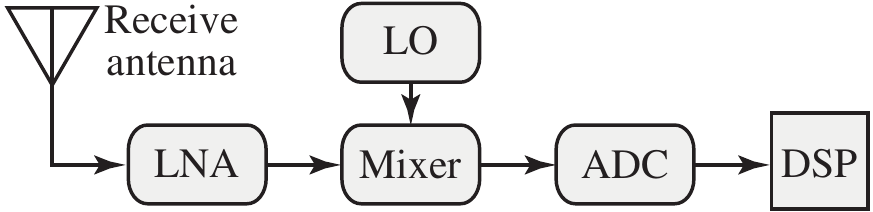}
\end{center}\vskip-7mm
\caption{Block diagram of one antenna branch in a typical receiver. The main hardware components are given, while various filters are also needed.} \label{figure_blockdiagram} \vskip-2mm
\end{figure}

\section{System Model}
\label{sec:system-model}

We consider the uplink of a network with $L\geq 1$ cells. The flat-fading channel from UE $k$ in cell $l$ to BS $j$ is denoted as $\vect{h}_{jlk} \triangleq [h_{jlk}^{(1)} \, \ldots \, h_{jlk}^{(N)}]^{\Ttran} \in \mathbb{C}^N$ and is modeled as block fading; thus, it is static for a coherence block of $T$ channel uses and has independent realizations between blocks. Each channel is zero-mean circularly symmetric complex Gaussian distributed as $\vect{h}_{jlk} \sim \mathcal{CN}(\vect{0},\lambda_{jlk} \vect{I}_N)$, where the average channel attenuation $\lambda_{jlk}>0$ depends on the large-scale fading.

The signal $x_{lk}(t)$ sent by UE $k$ in cell $l$ at channel use $t$ satisfies a power constraint of $\mathbb{E}\{ |x_{lk}(t) |^2 \} = p_{lk}$ and $\vect{x}_{l}(t) \triangleq [x_{l1}(t) \, \ldots \, x_{lK}(t)]^{\Ttran} \in \mathbb{C}^{K}$ is the transmit signal in cell $l$.

\begin{figure*}[t!]
\begin{align} \label{eq:achievable-SINR}
\mathrm{SINR}_{jk}(t) =  \frac{ p_{jk} | \mathbb{E}\{ \vect{v}_{jk}^{\Htran}(t) \vect{h}_{jjk}(t) \} |^2 }{ \fracSumtwo{l=1}{L} \fracSumtwo{m=1}{K} p_{lm}  \mathbb{E}\{ |\vect{v}_{jk}^{\Htran}(t) \vect{h}_{jlm}(t) |^2  \} - p_{jk} | \mathbb{E}\{ \vect{v}_{jk}^{\Htran}(t) \vect{h}_{jjk}(t) \} |^2 + \mathbb{E}\{ |\vect{v}_{jk}^{\Htran}(t) \boldsymbol{\upsilon}_j(t) |^2  \}   + \sigma^2 \xi \mathbb{E}\{ \| \vect{v}_{jk}(t) \|^2\}  } \tag{7}
\end{align} \vskip-2mm
\hrulefill
\vskip-4mm
\end{figure*}

Contrary to most prior works on massive MIMO (e.g., \cite{Rusek2013a} and references therein), the receiver branches at the BSs are assumed to be imperfect. Interestingly, it is shown in \cite{Schenk2008a,Mezghani2010a,Bjornson2014c} that imperfect hardware mainly causes
multiplicative phase-drifts, additive distortion noise, and noise amplifications. Based on these prior works, the received signal $\vect{y}_j(t) \in \mathbb{C}^{N}$ at BS $j$ at channel use $t \in \{ 1,\ldots,T \}$ in the coherence block is modeled as \vspace{-3mm}
\begin{equation} \label{eq:generalized-model}
\vect{y}_j(t) = \vect{D}_{\boldsymbol{\phi}_j(t)} \sum_{l=1}^{L} \vect{H}_{jl} \vect{x}_{l}(t) + \boldsymbol{\upsilon}_{j}(t) + \boldsymbol{\eta}_{j}(t)
\end{equation}
where $\vect{H}_{jl} \triangleq [ \vect{h}_{jl1} \, \ldots \, \vect{h}_{jlK}] \in \mathbb{C}^{N \times K}$ is the channel from UEs in cell $l$, while the hardware imperfections are given by:
\begin{enumerate}

\item Phase-drift matrix $\vect{D}_{\boldsymbol{\phi}_j(t)} \! \triangleq \! \diag(e^{\imath \phi_{j1}(t)},\ldots, e^{\imath \phi_{jN}(t)})$ where the drift at the $n$th antenna of BS $j$ at time $t$ follows a Wiener process: $\phi_{jn}(t) \!\sim\! \mathcal{N}( \phi_{jn}(t\!-\!1), \delta)$. The parameter $\delta \geq 0$ is the variance of the innovations.

\item Additive distortion noise $\boldsymbol{\upsilon}_{j}(t) \! \sim \! \mathcal{CN}(\vect{0},\vect{\Upsilon}_j )$ which is independent across time and antennas, such that
\begin{equation*}
 \vect{\Upsilon}_j \triangleq \kappa^2 \sum_{l=1}^{L} \sum_{k=1}^{K} p_{lk} \diag\left( |h_{jlk}^{(1)}|^2,\ldots, |h_{jlk}^{(N)}|^2\right)\notag
\end{equation*}
for a given channel realization. The variance at an antenna is proportional to the received signal power at this antenna and $\kappa \geq 0$ is the proportionality parameter \cite{Schenk2008a}.

\item The receiver noise $\boldsymbol{\eta}_{j}(t)  \sim \mathcal{CN}(\vect{0},\sigma^2 \xi \vect{I}_N)$ where
$\sigma^2$ is the fundamental thermal noise power and the parameter $\xi \geq 1$ is the noise amplification factor.
\end{enumerate}

This tractable model of hardware imperfections at the BS is characterized by three parameters: $\delta$, $\kappa$, and $\xi$. We exemplify in Section \ref{sec:circuit-examples} how these are related to the main hardware components of the BSs which are shown in Fig.~\ref{figure_blockdiagram}. In the analysis of this paper, we distinguish between two important cases: a common LO (CLO) on all antennas of a BS and separate LOs (SLOs) with identical properties. In the former case, the phase drifts $\phi_{jn}(t)$ are identical for all $n=1,\ldots,N$, while these drifts are independent when having SLOs.

\section{Achievable User Rates and Scaling Laws}

Next, we provide our main analytic throughput results for the system model in \eqref{eq:generalized-model}. These results are exploited in Section \ref{sec:circuit-examples} where we describe the individual impact of different hardware components on massive MIMO configurations.

Although the channel is fixed within the coherence block, the effective channels $\vect{h}_{jlk}(t) \triangleq \vect{D}_{\boldsymbol{\phi}_j(t)} \vect{h}_{jlk} \in \mathbb{C}^{N}$ change due to the phase-drifts. Hence, we need to estimate the channel for each $t$ used for transmission. Suppose pilot sequences that occupy  $B $ channel uses are transmitted in the beginning of each coherence block; $\tilde{\vect{x}}_{jk} \triangleq [x_{jk}(1) \, \ldots \, x_{jk}(B)]^{\Ttran} \in \mathbb{C}^{B}$ is the pilot sequence of UE $k$ in cell $j$. The following is the linear minimum mean squared error (LMMSE) estimator.

\begin{lemma}
\label{theorem:LMMSE-estimation}
Let $\boldsymbol{\psi}_j \triangleq [\vect{y}_j^{\Ttran}(1) \, \ldots \, \vect{y}_j^{\Ttran}(B)]^{\Ttran} \in \mathbb{C}^{NB}$ denote the received signal at BS $j$ during the pilot transmission. For both a CLO and SLOs, the LMMSE estimate of $\vect{h}_{jlk}(t)$ at any channel use $t\geq B$ for any $l$ and $k$ is
\begin{equation} \label{eq:LMMSE-estimator}
  \hat{\vect{h}}_{jlk}(t) = \left( \lambda_{jlk} \tilde{\vect{x}}_{lk}^{\Htran}  \vect{D}_{\boldsymbol{\delta}(t)} \boldsymbol{\Psi}^{-1}_j \kron \vect{I}_N \right) \boldsymbol{\psi}_j
\end{equation}
where $\vect{D}_{\boldsymbol{\delta}(t)} \triangleq \diag( e^{-\frac{\delta}{2} (t-1)}, e^{-\frac{\delta}{2} (t-2)}, \ldots, e^{-\frac{\delta}{2} (t-B)})$, \vspace{-3mm}
\begin{align}
\boldsymbol{\Psi}_j &\triangleq \sum_{\ell=1}^{L} \sum_{m=1}^{K} \lambda_{j \ell m} \vect{X}_{\ell m} + \sigma^2 \xi \vect{I}_B, \\
\vect{X}_{\ell m} &\triangleq \bar{\vect{X}}_{\ell m} + \kappa^2 \vect{D}_{| \tilde{\vect{x}}_{\ell m}|^2}, \\
\vect{D}_{| \tilde{\vect{x}}_{\ell m}|^2} &\triangleq \diag( |x_{\ell m}(1)|^2 , \ldots, |x_{\ell m}(B)|^2 ),
\end{align}
$\kron$ is the Kronecker product, and element $(i_1,i_2)$ of $\bar{\vect{X}}_{\ell m}$ is
\begin{equation}
[\bar{\vect{X}}_{\ell m} ]_{i_1,i_2} = \begin{cases} |x_{\ell m}(i_1)|^2, & i_1 = i_2, \\ x_{\ell m}(i_1) x_{\ell m}^*(i_2) e^{-\frac{\delta}{2} |i_1-i_2|}, &  i_1 \neq i_2. \end{cases}
\end{equation}
\end{lemma}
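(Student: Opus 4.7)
The plan is to apply the standard LMMSE formula for zero-mean jointly distributed quantities,
$\hat{\vect{h}}_{jlk}(t) = \mathbb{E}\{\vect{h}_{jlk}(t)\boldsymbol{\psi}_j^{\Htran}\}\bigl(\mathbb{E}\{\boldsymbol{\psi}_j\boldsymbol{\psi}_j^{\Htran}\}\bigr)^{-1}\boldsymbol{\psi}_j$,
and to reduce every matrix operation to a single Kronecker product with $\vect{I}_N$ by exploiting the independence of the channel entries across antennas.

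For the cross-covariance I would work block by block, evaluating $\mathbb{E}\{\vect{h}_{jlk}(t)\vect{y}_j^{\Htran}(\tau)\}$ for each $\tau\in\{1,\ldots,B\}$. The additive distortion and receiver noise contribute nothing, because conditional on the channels they have zero mean and are uncorrelated with $\vect{h}_{jlk}$. In the remaining signal double sum, independence across UEs kills every term except $(\ell,m)=(l,k)$, and $\mathbb{E}\{\vect{h}_{jlk}\vect{h}_{jlk}^{\Htran}\}=\lambda_{jlk}\vect{I}_N$ collapses the $N\times N$ block to a scalar multiple of $\vect{I}_N$. That scalar carries the factor $\mathbb{E}\{e^{\imath(\phi_{jn}(t)-\phi_{jn}(\tau))}\}$, which I would evaluate via the Wiener-process increment law $\phi_{jn}(t)-\phi_{jn}(\tau)\sim\mathcal{N}(0,\delta|t-\tau|)$ to get $e^{-\frac{\delta}{2}(t-\tau)}$ for $t\geq\tau$. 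Stacking over $\tau$ produces exactly $\lambda_{jlk}\,\tilde{\vect{x}}_{lk}^{\Htran}\vect{D}_{\boldsymbol{\delta}(t)}\kron\vect{I}_N$.

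For the auto-covariance I would split the $(\tau_1,\tau_2)$ block of $\mathbb{E}\{\vect{y}_j(\tau_1)\vect{y}_j^{\Htran}(\tau_2)\}$ into the three independent contributions in \eqref{eq:generalized-model}. The signal contribution reduces, by the same argument as in the cross-covariance step, to $e^{-\frac{\delta}{2}|\tau_1-\tau_2|}\sum_{\ell,m}\lambda_{j\ell m}x_{\ell m}(\tau_1)x_{\ell m}^*(\tau_2)\vect{I}_N$, which assembles into $\sum_{\ell,m}\lambda_{j\ell m}\bar{\vect{X}}_{\ell m}\kron\vect{I}_N$. The distortion, being temporally and spatially white with per-antenna variance $\kappa^2\sum_{\ell,m}\lambda_{j\ell m}|x_{\ell m}(\tau_1)|^2$ after averaging over the channel, gives the diagonal piece $\kappa^2\vect{D}_{|\tilde{\vect{x}}_{\ell m}|^2}$, and the receiver noise adds $\sigma^2\xi\vect{I}_B$ on the diagonal. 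Together these pieces combine into $\boldsymbol{\Psi}_j\kron\vect{I}_N$, and the mixed-product identity $(\vect{A}\kron\vect{I}_N)(\vect{B}\kron\vect{I}_N)^{-1}=(\vect{A}\vect{B}^{-1})\kron\vect{I}_N$ then delivers \eqref{eq:LMMSE-estimator}.

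The main subtlety I anticipate is the CLO-versus-SLO distinction: the two cases have very different joint laws for $\{\boldsymbol{\phi}_j(t)\}$ across antennas, so it is not a priori obvious that the same estimator works for both. The point is that the independence of the channel entries across antennas forces all cross-antenna terms in expressions like $\vect{D}_{\boldsymbol{\phi}_j(t_1)}\vect{h}_{jlk}\vect{h}_{jlk}^{\Htran}\vect{D}_{\boldsymbol{\phi}_j(t_2)}^{\Htran}$ to average to zero, leaving only the per-antenna phase-drift moment, which equals $e^{-\frac{\delta}{2}|t_1-t_2|}$ in both scenarios. The cleanest route is therefore to perform the entire computation at the scalar (per-antenna) level and only tack on the $\kron\vect{I}_N$ factor at the end, handling CLO and SLOs uniformly with no case split.
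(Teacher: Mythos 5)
Your proposal is correct and is essentially the paper's own approach: the paper's proof merely cites \cite{Bjornson2014c}, where the estimator is obtained exactly as you describe, via the zero-mean LMMSE formula $\hat{\vect{h}}_{jlk}(t) = \mathbb{E}\{\vect{h}_{jlk}(t)\boldsymbol{\psi}_j^{\Htran}\}(\mathbb{E}\{\boldsymbol{\psi}_j\boldsymbol{\psi}_j^{\Htran}\})^{-1}\boldsymbol{\psi}_j$ with both covariances collapsing to the form $(\cdot)\kron\vect{I}_N$ through the antenna-wise independence of the channel entries and the Wiener increment moment $\mathbb{E}\{e^{\imath(\phi_{jn}(t)-\phi_{jn}(\tau))}\}=e^{-\frac{\delta}{2}(t-\tau)}$. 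Your closing observation---that only per-antenna phase moments enter the second-order statistics, so the cross-antenna correlation of the drifts never appears---is precisely the unstated justification behind the paper's one-line claim that ``the same derivations hold for a CLO.''
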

\begin{proof}
This LMMSE estimator was derived for SLOs in \cite{Bjornson2014c} and the same derivations hold for a CLO.
\end{proof}

The channel estimates in Lemma \ref{theorem:LMMSE-estimation} are utilized to select receiver filters $\vect{v}_{jk}^{\Htran}(t) \in \mathbb{C}^{N}$.
Using an approach from \cite{Pitarokoilis2012a} and  \cite[Lemma 1]{Bjornson2014c}, the achievable rate for UE $k$ in cell $j$ is
\begin{equation*} \label{eq:achievable-rate}
R_{jk} = \frac{1}{T}  \sum_{t=B+1}^{T} \log_2 \big( 1 + \mathrm{SINR}_{jk}(t) \big) \quad [\textrm{bit/channel use}]
\end{equation*}
where $\mathrm{SINR}_{jk}(t)$ is given in \eqref{eq:achievable-SINR} at the top of this page.
The expectations in  \eqref{eq:achievable-SINR} can be computed in closed form if the BS applies maximum ratio combining (MRC): $\vect{v}_{jk}(t) =  \hat{\vect{h}}_{jjk}(t)$.

\setcounter{equation}{7}

\begin{lemma} \label{theorem:MRC-expectations}
If the MRC receive filter is used, then
\begin{align}
\mathbb{E}\{ \|\vect{v}_{jk}(t) \|^2\} & = N \lambda_{jjk}^2 \tilde{\vect{x}}_{jk}^{\Htran}  \vect{D}_{\boldsymbol{\delta}(t)} \boldsymbol{\Psi}^{-1}_j \vect{D}_{\boldsymbol{\delta}(t)}^{\Htran} \tilde{\vect{x}}_{jk} \notag \\
\mathbb{E}\{ \vect{v}_{jk}^{\Htran}(t) \vect{h}_{jjk}(t) \} & = \mathbb{E}\{ \|\vect{v}_{jk}(t) \|^2\}  \notag \\
\mathbb{E}\{ | \vect{v}_{jk}^{\Htran}(t) \vect{h}_{jlm}(t) |^2 \} &= \lambda_{jlm} \mathbb{E}\{ \| \vect{v}_{jk}(t) \|^2\}  \notag  \\
&\!\!\!\!\!\!\!\!\!\!\!\!\!\!\!\!\!\!\!\!\!\!\!\!\!\!\!\!\!\!\!\!\!\!\!\!\!\!\!\!\!\!\!\!\!\!\!\!\!\!\!\!  + N  \lambda_{jjk}^2 \lambda_{jlm}^2 \tilde{\vect{x}}_{jk}^{\Htran}   \vect{D}_{\boldsymbol{\delta}(t)} \boldsymbol{\Psi}_j^{-1} \vect{X}_{lm}  \boldsymbol{\Psi}_j^{-1} \vect{D}_{\boldsymbol{\delta}(t)}^{\Htran} \tilde{\vect{x}}_{jk} + N(N\!-\!1) \notag \\
&\!\!\!\!\!\!\!\!\!\!\!\!\!\!\!\!\!\!\!\!\!\!\!\!\!\!\!\!\!\!\!\!\!\!\!\!\!\!\!\!\!\!\!\!\!\!\!\!\!\!\!\!
\times\begin{cases}
\lambda_{jjk}^2 \lambda_{jlm}^2 \tilde{\vect{x}}_{jk}^{\Htran}   \vect{D}_{\boldsymbol{\delta}(t)} \boldsymbol{\Psi}_j^{-1} \bar{\vect{X}}_{lm}  \boldsymbol{\Psi}_j^{-1} \vect{D}_{\boldsymbol{\delta}(t)}^{\Htran} \tilde{\vect{x}}_{jk}  & \!\! \textrm{if a CLO} \\
 \lambda_{jjk}^2 \lambda_{jlm}^2 |\tilde{\vect{x}}_{jk}^{\Htran}  \vect{D}_{\boldsymbol{\delta}(t)} \boldsymbol{\Psi}_j^{-1}  \vect{D}_{\boldsymbol{\delta}(t)}^{\Htran} \tilde{\vect{x}}_{lm} |^2 & \!\! \textrm{if SLOs}
\end{cases} \notag \\
\mathbb{E}\{ |\vect{v}_{jk}^{\Htran}(t) \boldsymbol{\upsilon}_j(t) |^2  \}&= \kappa^2 \mathbb{E}\{ \|\vect{v}_{jk}(t) \|^2\} \sum_{l=1}^{L} \sum_{m=1}^{K} p_{lm} \lambda_{jlm}  \notag \\
&\!\!\!\!\!\!\!\!\!\!\!\!\!\!\!\!\!\!\!\!\!\!\!\!\!\!\!\!\!\!\!\!\!\!\!\!\!\!\!\!\!\!\!\!\!\!\!\!\!\!\!\!\! +\!\! \kappa^2 \!\sum_{l=1}^{L} \sum_{m=1}^{K} p_{lm} N  \lambda_{jjk}^2 \lambda_{jlm}^2 \tilde{\vect{x}}_{jk}^{\Htran}   \vect{D}_{\boldsymbol{\delta}(t)} \boldsymbol{\Psi}_j^{-1} \vect{X}_{lm}  \boldsymbol{\Psi}_j^{-1} \vect{D}_{\boldsymbol{\delta}(t)}^{\Htran} \tilde{\vect{x}}_{jk}. \notag
\end{align}
\end{lemma}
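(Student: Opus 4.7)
The plan is to reduce each expectation to moment computations on the pilot observations $\boldsymbol{\psi}_j$. By Lemma~\ref{theorem:LMMSE-estimation}, the Kronecker structure $(\cdot)\kron\vect{I}_N$ lets the MRC filter be written compactly as $\vect{v}_{jk}(t) = \sum_{b=1}^{B} a_b(t)\,\vect{y}_j(b)$, where $[a_1(t),\ldots,a_B(t)] = \lambda_{jjk}\,\tilde{\vect{x}}_{jk}^{\Htran}\,\vect{D}_{\boldsymbol{\delta}(t)}\,\boldsymbol{\Psi}_j^{-1}$ is a deterministic row vector. Substituting \eqref{eq:generalized-model} for each $\vect{y}_j(b)$ exposes the independent random primitives: channels $\{\vect{h}_{jl'm'}\}$, Wiener phase processes $\{\phi_{jn}(\cdot)\}$, distortion $\boldsymbol{\upsilon}_j(\cdot)$, and thermal noise $\boldsymbol{\eta}_j(\cdot)$. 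The one repeated primitive is $\mathbb{E}\{e^{\imath(\phi_{jn}(b)-\phi_{jn'}(t))}\}$, which for $t\geq b$ equals $e^{-\delta(t-b)/2}$ whenever $n=n'$ (or unconditionally in the CLO case) and vanishes whenever $n\neq n'$ in the SLO case; this single fact drives the CLO/SLO dichotomy.

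For the first two identities I would invoke the LMMSE orthogonality principle, which delivers $\mathbb{E}\{\vect{v}_{jk}^{\Htran}(t)\vect{h}_{jjk}(t)\}=\mathbb{E}\{\|\vect{v}_{jk}(t)\|^2\}$. The norm is obtained by evaluating the cross-covariance $\mathbb{E}\{\boldsymbol{\psi}_j\,\vect{h}_{jjk}^{\Htran}(t)\} = \lambda_{jjk}\,(\vect{D}_{\boldsymbol{\delta}(t)}\tilde{\vect{x}}_{jk})\kron\vect{I}_N$ using the Wiener identity above and channel isotropy, then combining with \eqref{eq:LMMSE-estimator} via $(\vect{A}\kron\vect{B})(\vect{C}\kron\vect{D}) = (\vect{AC})\kron(\vect{BD})$ to obtain $\mathbb{E}\{\hat{\vect{h}}_{jjk}(t)\hat{\vect{h}}_{jjk}^{\Htran}(t)\} = \lambda_{jjk}^2\,\tilde{\vect{x}}_{jk}^{\Htran}\vect{D}_{\boldsymbol{\delta}(t)}\boldsymbol{\Psi}_j^{-1}\vect{D}_{\boldsymbol{\delta}(t)}^{\Htran}\tilde{\vect{x}}_{jk}\,\vect{I}_N$, whose trace yields the first formula.

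For the third identity I would expand $|\vect{v}_{jk}^{\Htran}(t)\vect{h}_{jlm}(t)|^2$ as a quadruple sum over two pilot indices and two antenna indices, substitute the signal model, and partition the antenna sum into a diagonal part ($n=n'$, size $N$) and an off-diagonal part ($n\neq n'$, size $N(N-1)$). The diagonal part decomposes into (i) the ``uncorrelated'' baseline $\lambda_{jlm}\mathbb{E}\{\|\vect{v}_{jk}\|^2\}$, which is what remains when no $\vect{h}_{jlm}$ factor inside $\vect{y}_j(b)$ pairs with the explicit $\vect{h}_{jlm}(t)$, and (ii) the $N\lambda_{jjk}^2\lambda_{jlm}^2\,\tilde{\vect{x}}_{jk}^{\Htran}\vect{D}_{\boldsymbol{\delta}(t)}\boldsymbol{\Psi}_j^{-1}\vect{X}_{lm}\boldsymbol{\Psi}_j^{-1}\vect{D}_{\boldsymbol{\delta}(t)}^{\Htran}\tilde{\vect{x}}_{jk}$ term, obtained by pairing the UE-$(l,m)$ signal contribution through the kurtosis moment $\mathbb{E}\{|h_{jlm,n}|^4\}=2\lambda_{jlm}^2$; the per-antenna distortion variance $\kappa^2|\tilde{x}_{lm}(b)|^2$ supplies the $\kappa^2\vect{D}_{|\tilde{\vect{x}}_{lm}|^2}$ summand inside $\vect{X}_{lm}$. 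The off-diagonal part is where the CLO/SLO split emerges: for CLO the shared phase process preserves the full Hermitian pilot correlation $\bar{\vect{X}}_{lm}$ across antennas (and cross-antenna distortion contributions vanish by independence), whereas for SLO the vanishing cross-antenna phase correlation factors the double pilot sum into the outer product $|\tilde{\vect{x}}_{jk}^{\Htran}\vect{D}_{\boldsymbol{\delta}(t)}\boldsymbol{\Psi}_j^{-1}\vect{D}_{\boldsymbol{\delta}(t)}^{\Htran}\tilde{\vect{x}}_{lm}|^2$, reflecting the loss of coherent antenna combining for interferers under independent LOs. The distortion-noise expectation is handled in the same spirit: independence of $\boldsymbol{\upsilon}_j(t)$ across antennas forces $n=n'$, and its variance $\kappa^2\sum_{l,m}p_{lm}|h_{jlm,n}|^2$ couples back to $\|\vect{v}_{jk}\|^2$ through the same fourth-order channel moment, producing the mean-field contribution $\kappa^2\mathbb{E}\{\|\vect{v}_{jk}\|^2\}\sum_{l,m}p_{lm}\lambda_{jlm}$ together with the $\vect{X}_{lm}$-correction.

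The main obstacle is the combinatorial bookkeeping of the fourth-order channel moments, especially when $(l,m)=(j,k)$, in which case $\vect{h}_{jjk}$ appears simultaneously inside the estimator and in the inner product; only by carefully tracking the $2\lambda_{jjk}^2$ kurtosis-type contributions are the diagonal and off-diagonal antenna sums correctly separated to produce the precise factors involving $\vect{X}_{lm}$ and $\bar{\vect{X}}_{lm}$. The SLO variant was already carried out in \cite{Bjornson2014c}; the new labour is the CLO computation and unifying the two into the final expressions.
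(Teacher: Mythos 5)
Your proposal is correct and takes essentially the same route as the paper, whose proof is simply a deferral to the direct moment computation of \cite[Theorem 2]{Bjornson2014c} for SLOs together with the observation that only the cross-antenna second-order terms change with a CLO: like that derivation, you write $\vect{v}_{jk}(t)$ as a deterministic pilot-weighted combination of $\vect{y}_j(1),\ldots,\vect{y}_j(B)$, use LMMSE orthogonality for the first two identities, split the antenna double sum into its diagonal ($N$) and off-diagonal ($N(N-1)$) parts, and evaluate Gaussian fourth moments ($\mathbb{E}\{|h|^4\}=2\lambda^2$) and Wiener phase-increment expectations, with the CLO/SLO dichotomy entering exactly through $\bar{\vect{X}}_{lm}$ versus the factored outer product. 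One caveat: your declared ``primitive'' $\mathbb{E}\{e^{\imath(\phi_{jn}(b)-\phi_{jn'}(t))}\}$ does not in general vanish for $n\neq n'$ with SLOs (with a deterministic initial phase it equals $e^{-\delta(b+t)/2}$, and it vanishes only under an independent uniform initial-phase convention), and it is in any case not what drives the dichotomy; the facts actually needed in the second-moment terms are the four-phase increment expectations $\mathbb{E}\{e^{\imath(\phi_{jn}(t)-\phi_{jn}(b_1))}e^{-\imath(\phi_{jn'}(t)-\phi_{jn'}(b_2))}\}$, which collapse to $e^{-\delta|b_1-b_2|/2}$ for a CLO and factor as $e^{-\delta(t-b_1)/2}e^{-\delta(t-b_2)/2}$ for SLOs with $n\neq n'$ --- had you applied the vanishing claim literally, the off-diagonal SLO term would wrongly come out as zero, but your later derivation uses the correct factorization, so the slip does not propagate.
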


The expectations for SLOs were previously derived in \cite[Theorem 2]{Bjornson2014c} and, interestingly, it is only the second order moments that are different with a CLO. Hence, the case with the smallest variance  $\sum_{l=1}^{L} \sum_{m=1}^{K} p_{lm}  \mathbb{E}\{ |\vect{v}_{jk}^{\Htran}(t) \vect{h}_{jlm}(t) |^2  \}$ of interference gives the largest rate for UE $k$ in cell $j$. This term depends mainly on the pilot sequences and phase drifts, as seen from the expressions in Lemma \ref{theorem:MRC-expectations}. The only difference is that $\bar{\vect{X}}_{\ell m}$ with a CLO is replaced by $\vect{D}_{\boldsymbol{\delta}(t)}  \tilde{\vect{x}}_{\ell m} \tilde{\vect{x}}_{\ell m}^{\Htran} \vect{D}_{\boldsymbol{\delta}(t)}^{\Htran} $ with SLOs. These terms are equal when there are no phase drifts (i.e., $\delta=0$), while the difference grows larger with $\delta$. In particular, the term $\bar{\vect{X}}_{\ell m}$ is unaffected by the time index $t$, while the corresponding term for SLOs decays as $e^{-\delta t}$ (from $\vect{D}_{\boldsymbol{\delta}(t)}$).
Hence, we expect SLOs to provide larger user rates than a CLO, because interference reduces faster with $t$ when the independent phase drifts mitigate pilot contamination.

By letting $N \rightarrow \infty$ in Lemma \ref{theorem:MRC-expectations}, one can obtain closed-form expressions for the asymptotic SINRs. It can be seen that the detrimental impact of hardware imperfections vanishes almost completely as $N$ grows large \cite[Corollary 1]{Bjornson2014c}. This result holds for any fixed values of the parameters $\delta$, $\kappa$, and $\xi$. It is also possible to increase these parameters with $N$. This gives a gradual degradation of the circuits' quality at the BS and the scaling should fulfill the following scaling law.

\begin{lemma}\label{lemma:scaling-law}
Suppose the hardware imperfection parameters are replaced as $\kappa^2 \mapsto \kappa_{0}^2 N^{\tau_1}$, $\xi \mapsto \xi_{0} N^{\tau_2}$, and $\delta \mapsto \delta_{0} (1+ \log_e(N^{\tau_3}) )$, for some scaling parameters $\tau_1,\tau_2,\tau_3 \geq 0$ and some initial values $\kappa_0,\xi_0,\delta_0 \geq 0$. If
\begin{equation} \label{eq:scaling-law}
\begin{cases} \max(\tau_1,\tau_2) \leq \frac{1}{2} \,\,\, \textrm{and} \,\,\, \tau_3=0 & \textrm{if a CLO} \\
\max(\tau_1,\tau_2) + \frac{\delta_{0} (t-B)}{2}\tau_3 \leq \frac{1}{2} & \textrm{if SLOs}.
\end{cases}
\end{equation}
then $\mathrm{SINR}_{jk}(t)$ with MRC has a non-zero limit as $N \! \rightarrow \! \infty$.
\end{lemma}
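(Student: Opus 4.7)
The plan is to substitute $\kappa^2 \!\mapsto\! \kappa_0^2 N^{\tau_1}$, $\xi \!\mapsto\! \xi_0 N^{\tau_2}$, and $\delta \!\mapsto\! \delta_0(1 + \tau_3 \log_e N)$ into the closed-form moments of Lemma~\ref{theorem:MRC-expectations}, read off the leading $N$-power of the numerator and of each denominator term of \eqref{eq:achievable-SINR}, and verify that the two cases in \eqref{eq:scaling-law} are exactly the thresholds that keep the SINR ratio bounded and positive.

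First I extract the leading orders of the building blocks. Setting $\alpha := \max(\tau_1,\tau_2)$ and $\beta := \delta_0 \tau_3/2$: the $s$th diagonal entry of $\vect{D}_{\boldsymbol{\delta}(t)}$ acquires the polynomial factor $N^{-\beta(t-s)}$; the matrix $\boldsymbol{\Psi}_j$ is dominated by its diagonal pieces $\kappa^2\sum_{\ell m}\lambda_{j\ell m}\vect{D}_{|\tilde{\vect{x}}_{\ell m}|^2} + \sigma^2\xi\vect{I}_B = \Theta(N^\alpha)$, while the off-diagonal entries of $\bar{\vect{X}}_{\ell m}$ carry the extra decaying weight $N^{-\beta|i_1-i_2|}$. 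A Neumann-series expansion therefore yields $\boldsymbol{\Psi}_j^{-1}$ as a leading-order diagonal matrix of order $N^{-\alpha}$. Consequently every bilinear form $\tilde{\vect{x}}^{\Htran}\vect{D}_{\boldsymbol{\delta}(t)}\boldsymbol{\Psi}_j^{-1}\vect{D}_{\boldsymbol{\delta}(t)}^{\Htran}\tilde{\vect{x}}$ is dominated by its $s\!=\!s'\!=\!B$ contribution (the latest pilot slot), giving $\mathbb{E}\{\|\vect{v}_{jk}(t)\|^2\} = \Theta(N^{1-\alpha-2\beta(t-B)})$ and therefore a numerator of order $N^{2-2\alpha-4\beta(t-B)}$.

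Second, I compare each denominator term to this rate. The linear-in-$N$ interference pieces and the first distortion term are subdominant. The receiver-noise term $\sigma^2\xi\mathbb{E}\{\|\vect{v}_{jk}\|^2\}$ and the $\vect{X}_{lm}$-triple-product distortion piece in the last line of Lemma~\ref{theorem:MRC-expectations} contribute, after normalising by the numerator, the respective exponents $-1+\tau_2+\alpha+2\beta(t-B)$ and $-1+2\tau_1+2\beta(t-B)$; a case split on whether $\alpha=\tau_1$ or $\alpha=\tau_2$ shows that both are $\leq 0$ exactly when $\max(\tau_1,\tau_2)+\beta(t-B)\leq 1/2$. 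Finally, the $N(N-1)$ pilot-contamination piece is what renders the limit nonzero: for SLOs it involves $|\tilde{\vect{x}}_{jk}^{\Htran}\vect{D}_{\boldsymbol{\delta}(t)}\boldsymbol{\Psi}_j^{-1}\vect{D}_{\boldsymbol{\delta}(t)}^{\Htran}\tilde{\vect{x}}_{lm}|^2 = \Theta(N^{-2\alpha-4\beta(t-B)})$, so the $N(N-1)$ prefactor makes it match the numerator and produce a finite positive contribution; for a CLO the $\bar{\vect{X}}_{lm}$ sandwiched in the middle contributes only two of the four $\vect{D}_{\boldsymbol{\delta}(t)}$ decay factors, so the piece scales as $\Theta(N^{2-2\alpha-2\beta(t-B)})$ and its ratio $N^{2\beta(t-B)}$ to the numerator blows up unless $\tau_3=0$.

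The main obstacle is making the ``$\boldsymbol{\Psi}_j^{-1}$ is diagonal to leading order'' claim rigorous, because the off-diagonal entries of $\bar{\vect{X}}_{\ell m}$ are themselves $N$-dependent. Writing $\boldsymbol{\Psi}_j = N^\alpha(\vect{\Delta}+\vect{E}_N)$, with $\vect{\Delta}$ a fixed strictly-positive diagonal matrix and $\vect{E}_N$ collecting the remaining pieces (each $O(N^{-\alpha})$ on the diagonal or $O(N^{-\alpha-\beta|i_1-i_2|})$ off the diagonal), one verifies $\|\vect{E}_N\|\to 0$, so a Neumann series gives $\boldsymbol{\Psi}_j^{-1} = N^{-\alpha}\vect{\Delta}^{-1}+o(N^{-\alpha})$ uniformly. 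Once this is in place, the identification of the dominant $s\!=\!s'\!=\!B$ term is routine, and collecting the $N$-exponents reproduces \eqref{eq:scaling-law} exactly.
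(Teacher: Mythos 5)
Your proposal is correct and takes essentially the same approach as the paper: the paper's ``proof'' is a one-line citation to \cite[Corollary 3]{Bjornson2014c}, whose argument is exactly your procedure of substituting the scaled parameters into the closed-form moments of Lemma~\ref{theorem:MRC-expectations} and tracking the leading $N$-exponents, with the CLO/SLO dichotomy arising precisely as you identify it---the $N(N-1)$ pilot-contamination term carries only two phase-drift decay factors through $\bar{\vect{X}}_{lm}$ for a CLO but four through $|\tilde{\vect{x}}_{jk}^{\Htran}\vect{D}_{\boldsymbol{\delta}(t)}\boldsymbol{\Psi}_j^{-1}\vect{D}_{\boldsymbol{\delta}(t)}^{\Htran}\tilde{\vect{x}}_{lm}|^2$ for SLOs, forcing $\tau_3=0$ in the former case. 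Your Neumann-series justification of the leading-order diagonality of $\boldsymbol{\Psi}_j^{-1}$ simply makes rigorous what the citation leaves implicit (note it tacitly assumes pilots with nonzero entries, e.g., the DFT pilots used in the paper, which is the same tacit assumption behind the $(t-B)$ exponent in the lemma statement itself).
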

\begin{proof}
This follows along the lines of \cite[Corollary 3]{Bjornson2014c}.
\end{proof}

Lemma \ref{lemma:scaling-law} proves that the circuit design can be relaxed as $N$ increases. By accepting larger distortions we can achieve better energy efficiency in the circuits or lower hardware costs; see Section \ref{sec:circuit-examples}. The scaling law shows that the variances of the additive distortion noise and noise amplification can be increased simultaneously as $N$ to some exponent. The phase-drift variance can scale only for SLOs and only logarithmically with $N$, since it affects the signal itself. In this case, \eqref{eq:scaling-law} manifests a trade-off between increasing imperfections that cause additive and multiplicative distortions.

\section{Scaling Law Aware Circuit Design}
\label{sec:circuit-examples}

We now exemplify what the scaling law in Lemma
\ref{lemma:scaling-law} means for the hardware components at the BS, depicted in Fig.~\ref{figure_blockdiagram}.

\subsection{Analog-to-Digital Converter (ADC)}

The ADC quantizes the received signal to $b$ bit resolution. The quantization error can be included in the additive distortion noise $\boldsymbol{\upsilon}_{j}(t)$ and contributes to $\kappa^2$ with $2^{-2b}$
 \cite{Mezghani2010a}. The scaling law in Lemma \ref{lemma:scaling-law} allows us to increase the variance $\kappa^2$ as $N^{\tau_1}$ for $\tau_1 \leq \frac{1}{2}$. This corresponds to reducing the resolution of the ADC with $\frac{\tau_1}{2} \log_2(N)$ bits, which allows for substantial cost reduction. For example, we can reduce the ADC resolution by 2 bits if we deploy 256 antennas instead of one. For very large arrays, it is even sufficient to use 1-bit ADCs.

The power dissipation of an ADC, $P_{\mathrm{ADC}}$, is proportional to $2^{2b}$ \cite[Eq.~(8)]{Mezghani2010a} and can, thus, be decreased as $1/N^{\tau_1}$. If each antenna has a separate ADC, the total power $N P_{\mathrm{ADC}}$ still increases with $N$ but proportionally to $N^{1-\tau_1}$ for  $\tau_1 \leq \frac{1}{2}$, instead of $N$, due to the gradually lower ADC resolution.

\vspace{-1mm}

\subsection{Low Noise Amplifier (LNA)}

The LNA is an analog circuit that amplifies the received signal. It is shown in \cite{Song2008b} that the behavior of an LNA is characterized by the figure-of-merit (FoM) expression \vspace{-2mm}
\begin{equation} \label{eq:FOM-LNA}
\mathrm{FoM}_{\mathrm{LNA}} = \frac{G}{(\xi -1) P_{\mathrm{LNA}}}
\end{equation} \vskip-2mm
\noindent where $\xi$ is the noise amplification factor defined in Section \ref{sec:system-model}, $G$ is the amplifier gain, and $P_{\mathrm{LNA}}$ is the power consumption of the LNA. For optimized LNAs, $\mathrm{FoM}_{\mathrm{LNA}}$ is a constant determined by the circuit architecture \cite{Song2008b}; thus, $\mathrm{FoM}_{\mathrm{LNA}}$ basically scales with the hardware cost.  The scaling law in Lemma \ref{lemma:scaling-law} allows us to increase $\xi$ proportional to $N^{\tau_2}$ for $\tau_2 \leq \frac{1}{2}$. The noise figure, defined as $10 \log_{10} (\xi)$, can thus be increased by $\tau_2 10 \log_{10} (N)$ dB. For example, we can increase it by 10 dB if we deploy 100 antennas instead of one.

For a given architecture, the invariance of the $\mathrm{FoM}_{\mathrm{LNA}}$ in \eqref{eq:FOM-LNA} implies that we can decrease the power consumption (roughly) proportional to $1/N^{\tau_2}$. Hence, we can make the total power consumption of the $N$ LNAs, $N P_{\mathrm{LNA}}$, increase as $N^{1-\tau_2}$ instead of $N$ by increasing the noise amplification.

\vspace{-1mm}

\subsection{Local Oscillator (LO)}

Phase noise in the LOs is the main source of multiplicative phase drifts. If the LOs are free-running, the drifts are modeled by the Wiener process, defined in Section \ref{sec:system-model}, with variance
\begin{equation} \label{eq:oscillator-variance}
\delta = 4\pi^2 f_c^2 T_s \zeta
\end{equation}
where $f_c$ is the carrier frequency, $T_s$ is the symbol time, and $\zeta$ is a constant that characterizes the quality of the LO  \cite{Petrovic2007a}. Moreover, the power dissipation $P_{\mathrm{LO}}$ in an LO is directly coupled to $\zeta$, such that $P_{\mathrm{LO}} \zeta \approx \mathrm{FoM}_{\mathrm{LO}}$ where the FoM value $\mathrm{FoM}_{\mathrm{LO}}$ depends on the circuit architecture \cite{Petrovic2007a,Park2008a} and naturally on the hardware cost. For a given architecture, we can increase $\delta$ and, thereby, decrease the power $P_{\mathrm{LO}}$. The scaling law in Lemma \ref{lemma:scaling-law} allows us to increase $\delta$ proportionally to $(1+ \log_e(N^{\tau_3}) )$ when using SLOs. Hence, the power dissipation in the LOs can be reduced as $\frac{1}{1+\tau_3 \log_e(N)}$. This reduction is only logarithmic in $N$, which stands in contrast to the $1/\sqrt{N}$ scalings for ADCs and LNAs (with $\tau_1\!=\!\tau_2 \!=\! \frac{1}{2}$). Since linear increase is much faster than logarithmic decay, the total power $N P_{\mathrm{LO}}$ with SLOs increases almost linearly with $N$. Note that no scalings are allowed when having a CLO.

The LO variance formula in \eqref{eq:oscillator-variance} gives other possibilities than decreasing the circuit power. In particular, one can increase the carrier frequency $f_c$ with $N$ by exploiting the scaling law in Lemma \ref{lemma:scaling-law}. This is interesting because massive MIMO has been identified as a key enabler for operating in millimeter wave bands, in which phase noise is more severe since the variance in \eqref{eq:oscillator-variance} increases as $f_c^2$. Fortunately, massive MIMO has an inherent resilience towards phase noise.

\vspace{-3mm}

\section{Numerical Example}

\vspace{-1mm}

The analytic results are corroborated by simulating a scenario with 16 cells and wrap-around to avoid edge effects; see Fig.~\ref{figure_simulationscenario}. Each square cell is $250 \times  250$ meters and divided into 8 virtual sectors; each sector contains one uniformly distributed UE (with minimum distance $35$ meters). Each sector has an orthogonal pilot sequence from a DFT matrix \cite{Bjornson2014c}, but the same pilot is reused in the same sector of other cells.

The channel attenuations are $\lambda_{jlk} = 10^{s_{jlk}-1.53}/d_{jlk}^{3.76}$ where $d_{jlk}$ is the distance in meters between BS $j$ and UE $k$ in cell $l$ and $s_{jlk} \sim \mathcal{N}(0,0.25)$ is a realization of the shadow-fading. The transmit powers are $p_{jk} = -47$ dBm/Hz, the thermal noise power is $\sigma^2 = -174$ dBm/Hz, $B=8$ is the pilot sequence length, and the coherence block is $T=500$.

The achievable sum rate is shown in Fig.~\ref{figure_simulation1} for a system with either ideal hardware or hardware imperfections given by $b=8$ bit ADCs, 2 dB noise figure in the LNAs, and LOs with a phase noise variance of $1.6 \cdot 10^{-4}$. This corresponds to $\{ \kappa_0, \, \xi_0, \, \delta_0 \} \!=\! \{2^{-8}, \, 10^{0.2}, \, 1.6 \cdot 10^{-4}\}$ when we scale the hardware imperfections with $N$ as described in Lemma \ref{lemma:scaling-law}.

We see that the throughput is reduced by hardware imperfections. The loss is larger when the imperfections are increased with $N$, but the difference essentially vanishes as $N \rightarrow \infty$ if the scaling law for SLOs in Lemma \ref{lemma:scaling-law} is satisfied. The throughput loss is large when the scaling law is not followed.
We observe that SLOs provide higher throughput than a CLO. This is because parts of the interference average out.

\vspace{-3mm}

\section{Conclusion}

\vspace{-1mm}

Massive MIMO systems are prone to hardware imperfections in ADCs, LNAs, and LOs. We have shown that these systems have an inherent resilience to such imperfections. The distortions can be increased with $N$, which allows the circuit power of ADCs and LNAs to increase as $\sqrt{N}$ instead of $N$. The analysis shows that having a CLO is better in terms of energy efficiency and cost, while SLOs provide higher throughput.

\begin{figure}
\begin{center}
\includegraphics[width=\columnwidth]{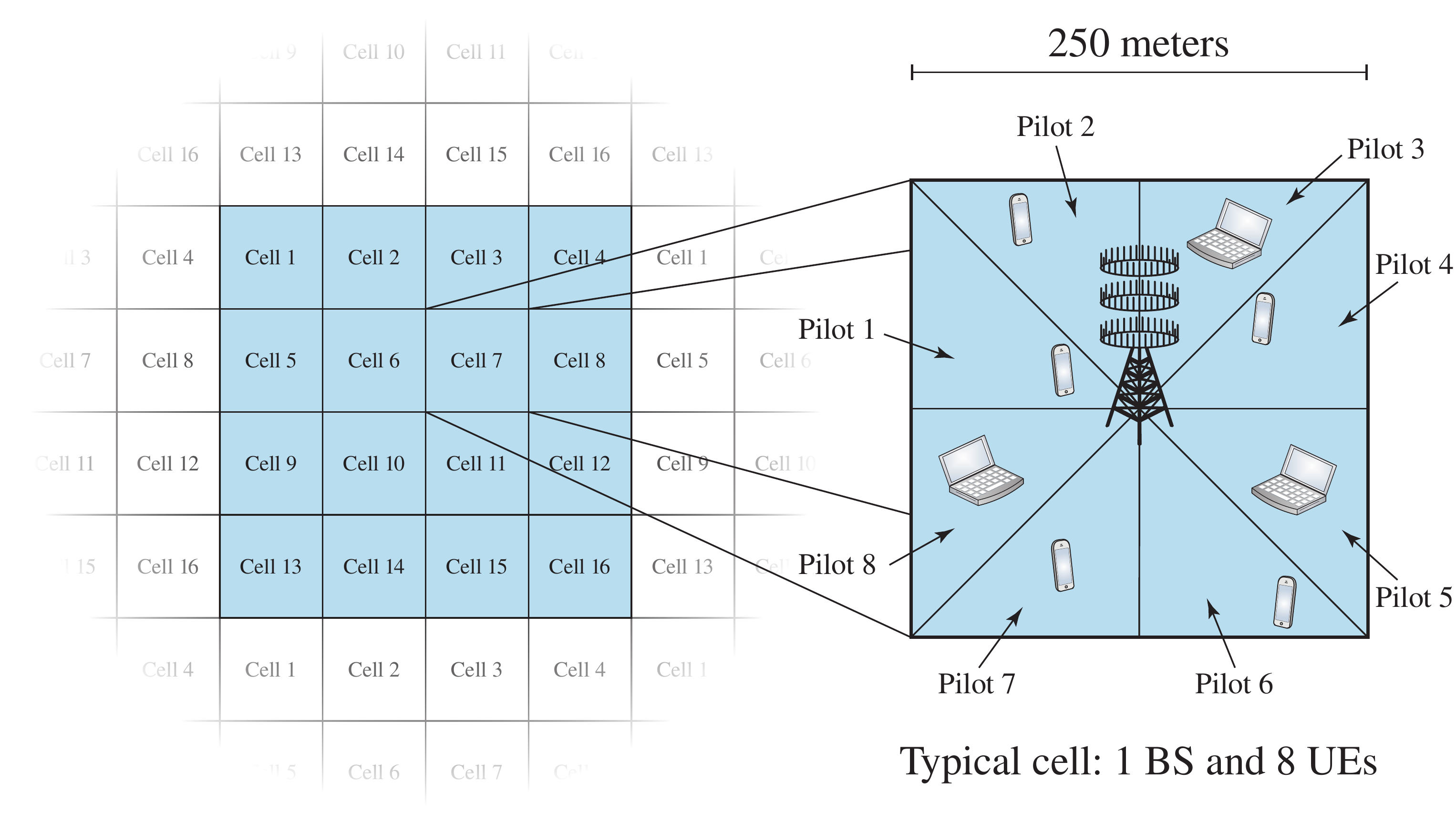}
\end{center}\vskip-6mm
\caption{The simulation scenario consists of 16 square cells with wrap-around to avoid edge effects.} \label{figure_simulationscenario}
\end{figure}

\begin{figure}[t!]
\begin{center}
\includegraphics[width=\columnwidth]{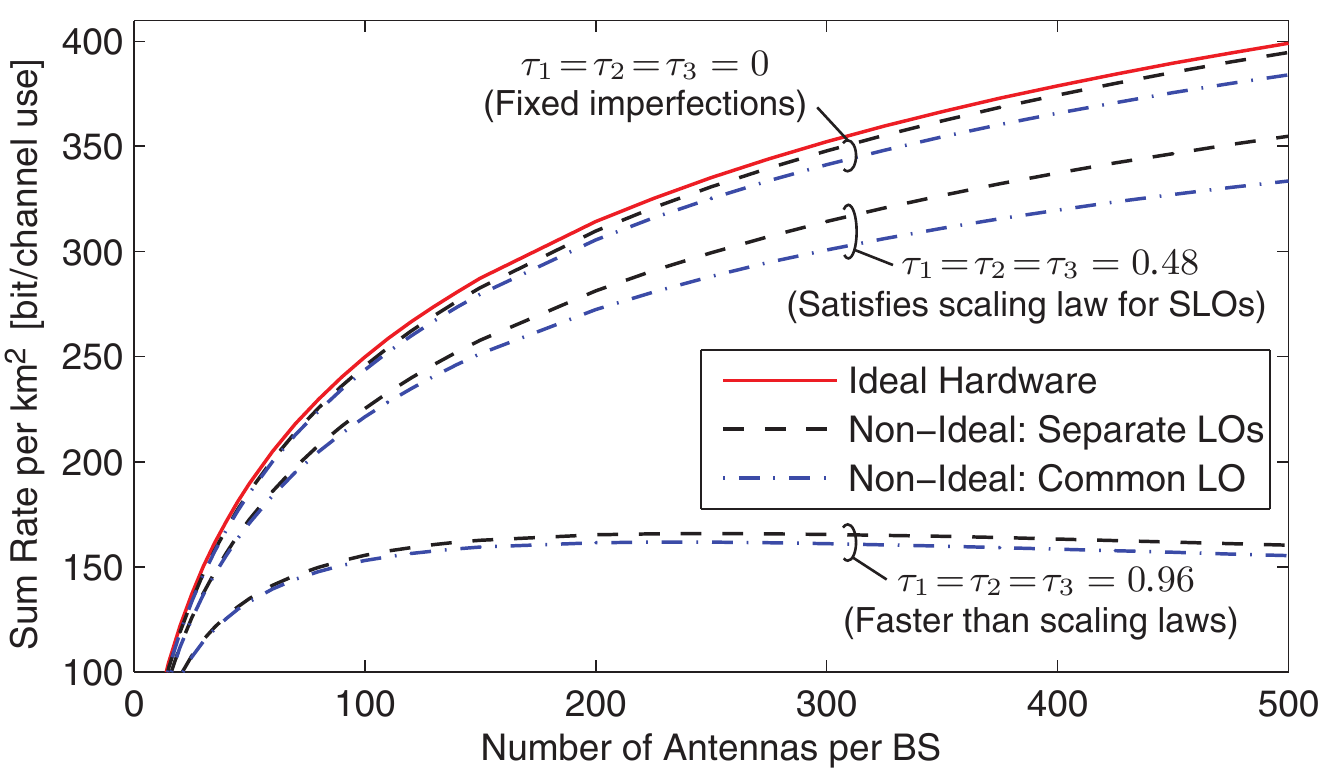}
\end{center}\vskip-7mm
\caption{Sum rate in the $1 \, \mathrm{km}^2$ area in Fig.~\ref{figure_simulationscenario} as a function of the numbers of antennas. We consider different hardware imperfections and scaling strategies.} \label{figure_simulation1} \vskip-1mm
\end{figure}

\vspace{-2mm}

\ninept

\bibliographystyle{IEEEbib}
\bibliography{IEEEabrv,refs}

\end{document}